\begin{document}
%
% paper title
% Titles are generally capitalized except for words such as a, an, and, as,
% at, but, by, for, in, nor, of, on, or, the, to and up, which are usually
% not capitalized unless they are the first or last word of the title.
% Linebreaks \\ can be used within to get better formatting as desired.
% Do not put math or special symbols in the title.
\title{Machine Learning Based Hybrid Precoding for MmWave MIMO-OFDM with Dynamic Subarray}

% author names and affiliations
% use a multiple column layout for up to three different
% affiliations
%\author{\IEEEauthorblockN{Michael Shell}
%\IEEEauthorblockA{School of Electrical and\\Computer Engineering\\
%Georgia Institute of Technology\\
%Atlanta, Georgia 30332--0250\\
%Email: http://www.michaelshell.org/contact.html}
%\and
%\IEEEauthorblockN{Homer Simpson}
%\IEEEauthorblockA{Twentieth Century Fox\\
%Springfield, USA\\
%Email: homer@thesimpsons.com}
%\and
%\IEEEauthorblockN{James Kirk\\ and Montgomery Scott}
%\IEEEauthorblockA{Starfleet Academy\\
%San Francisco, California 96678--2391\\
%Telephone: (800) 555--1212\\
%Fax: (888) 555--1212}}

% conference papers do not typically use \thanks and this command
% is locked out in conference mode. If really needed, such as for
% the acknowledgment of grants, issue a \IEEEoverridecommandlockouts
% after \documentclass

% for over three affiliations, or if they all won't fit within the width
% of the page, use this alternative format:
%
\author{\IEEEauthorblockN{Yiwei Sun\IEEEauthorrefmark{1},
Zhen Gao\IEEEauthorrefmark{2},
Hua Wang\IEEEauthorrefmark{1}, and
Di Wu\IEEEauthorrefmark{3}}
\IEEEauthorblockA{\IEEEauthorrefmark{1}School of Information and Electronics, Beijing Institute of Technology, Beijing, China}
\IEEEauthorblockA{\IEEEauthorrefmark{2}Advanced Research Institute of Multidisciplinary Science, Beijing Institute of Technology, Beijing, China}
\IEEEauthorblockA{\IEEEauthorrefmark{3}China Academy of Information and Communications Technology, Beijing, China\\
\{sunyiwei, gaozhen16, wanghua\}@bit.edu.cn}}

% use for special paper notices
%\IEEEspecialpapernotice{(Invited Paper)}

% make the title area
\maketitle

% As a general rule, do not put math, special symbols or citations
% in the abstract
\begin{abstract}
Hybrid precoding design can be challenging
for broadband millimeter-wave (mmWave) massive MIMO due to the frequency-flat analog precoder in radio frequency (RF). Prior broadband hybrid precoding work usually focuses on fully-connected array (FCA), while seldom considers the energy-efficient partially-connected subarray (PCS) including the fixed subarray (FS) and dynamic subarray (DS). Against this background,
this paper proposes a machine learning based broadband hybrid precoding for mmWave massive MIMO with DS. Specifically, we first propose an optimal hybrid precoder based on principal component analysis (PCA) for the FS, whereby the frequency-flat RF precoder for each subarray is extracted from the principle component of the optimal frequency-selective precoders for fully-digital MIMO. Moreover, we extend the PCA-based hybrid precoding to DS, where a shared agglomerative hierarchical
clustering (AHC) algorithm developed from machine learning is proposed to group the DS for improved spectral efficiency (SE). Finally, we investigate the energy efficiency (EE) of the proposed scheme for both passive and active antennas. Simulations have confirmed that the proposed scheme outperforms conventional schemes in both SE and EE.
\end{abstract}

% no keywords
\begin{IEEEkeywords}
Hybrid precoding, MIMO-OFDM, millimeter
wave, machine learning, dynamic subarray, energy efficiency.
\end{IEEEkeywords}

% For peer review papers, you can put extra information on the cover
% page as needed:
% \ifCLASSOPTIONpeerreview
% \begin{center} \bfseries EDICS Category: 3-BBND \end{center}
% \fi
%
% For peerreview papers, this IEEEtran command inserts a page break and
% creates the second title. It will be ignored for other modes.
\IEEEpeerreviewmaketitle

\section{Introduction}
% no \IEEEPARstart

% You must have at least 2 lines in the paragraph with the drop letter
% (should never be an issue)
In the fifth generation mobile communications, the application of millimeter-wave (mmWave) is vital by the virtue of providing high data rate and large bandwidth \cite{R1,R5,G1}. However, mmWave channel suffers from a severe path loss, and the traditional fully-digital precoding with massive antennas to mitigate this issue is extremely power consuming \cite{ref_plus_5}. Therefore, hybrid precoding has been proposed to achieve the large array gains with the reduced hardware cost and power consumption \cite{liao,ref_plus_10,R4,R6}. By far, existing broadband hybrid precoding schemes usually focus on fully-connected array (FCA), while seldom consider partially-connected subarray (PCS) like fixed subarray (FS) and dynamic subarray (DS). Therefore, hybrid precoding with PCS in broadband channel is an interesting topic to explore.

Most prior work is based on narrowband mmWave channels \cite{OMP,ref_plus_9,mao}. Specifically, a compressive sensing-based
hybrid precoding has been proposed in \cite{OMP}, where the channel sparsity is ingeniously exploited to design hybrid precoding with the aid of orthogonal matching pursuit (OMP) algorithm. Moreover, a constant envelope hybrid precoding scheme is proposed, where two cost-efficient sub-connected hybrid architectures are considered to optimize the hybrid precoding under per-antenna constant envelope constraints \cite{ref_plus_9}. To improve bit-error-rate, an over-sampling codebook-based hybrid minimum sum-mean-square-error precoding is designed \cite{mao}. On the other hand, mmWave channels appear to have the frequency-selective fading, where OFDM is usually adopted to combat the time dispersion channels \cite{lim_feedback,170825,fre_flat}. Specifically, an insightful broadband hybrid precoder based on limited-feedback codebook has been proposed for FCA \cite{lim_feedback}. By exploiting the channel correlation information among different subcarriers, a broadband hybrid precoding is proposed for FS and DS \cite{170825}. Finally, \cite{fre_flat} has theoretically shown the optimality of frequency flat precoding by proving that dominant subspaces of the frequency domain channel matrices of different subcarriers are equivalent. However, this conclusion is based on the purely sparse channels with discrete angles of arrival (AoA) and angles of departure (AoD), while the explicit precoder solution is not provided.

In this paper, we propose a machine learning based hybrid precoding scheme for
mmWave MIMO-OFDM systems with DS, where a shared agglomerative hierarchical clustering (shared-AHC) algorithm is proposed for DS grouping to improve SE performance. First, we propose a PCA-based analog precoder scheme for FS by abstracting the low-dimensional signal space of frequency-flat precoder for given subarray from the high-dimensional signal space of optimal frequency-selective fully-digital precoders using PCA. Besides, the optimality of the proposed PCA-based hybrid precoder design is theoretically proven and verified by simulations. Second, we propose the shared-AHC algorithm inspired by cluster analysis in the field of machine learning for antenna grouping. By implementing shared-AHC algorithm, the SE performance of PCS can be further enhanced for effective antenna grouping adapting to the spatial features of the frequency-selective  channels. Finally, we consider the practical passive/active antennas for EE performance analysis. Simulation results confirm the better spectral efficiency (SE) and energy efficiency (EE) performance achieved by the proposed scheme than existing schemes. Meanwhile, DS has the overwhelming advantage for both active and passive antennas.

\textsl{Notations}: Following notations are used throughout this paper. $\mathbf{A}$ is a matrix, $\mathbf{a}$ is a vector, $a$ is a scalar, and $\mathcal{A}$ is a set. Conjugate transpose and transpose of $\mathbf{A}$ are $\mathbf{A}^H$ and $\mathbf{A}^T$, respectively. The $(i,j)$th entry of $\mathbf{A}$ is $[\mathbf{A}]_{i,j}$, and $[\mathbf{A}]_{i,:}$ ($[\mathbf{A}]_{:,j}$) denotes the $i$th row ($j$th column) of $\mathbf{A}$. Frobenius norm is denoted by $||\cdot||_F$. $|\mathbf{A}|$, $|\mathcal{A}|$, $|\mathbf{a}|$, and $|a|$ are the determinant of a square matrix $\mathbf{A}$, cardinality of a set $\mathcal{A}$, $\ell_2$-norm of a vector $\mathbf{a}$, and modulus of a number $a$, respectively. The $i$th largest singular value of a matrix $\mathbf{A}$ is defined as $\lambda_i(\mathbf{A})$. Additionally,
$\text{blkdiag}(\mathbf{a}_1,\cdots,\mathbf{a}_K)$ is a block diagonal matrix with $\mathbf{a}_i$ ($1\leq i\leq K$) on its diagonal blocks.

\section{System Model}
We consider an mmWave massive 3-dimensional (3D) MIMO system, where both the BS and user employ the uniform planar array (UPA), and OFDM is adopted to combat the frequency-selective fading channels. The BS is equipped with $N_t=N_t^v\times N_t^h$ antennas and $N_t^{\rm RF}\ll N_t$ chains, where $N_t^v$ and $N_t^h$ are the numbers of vertical and horizontal transmit antennas, respectively. The user is equipped with $N_r=N_r^v\times N_r^h$ antennas, where $N_r^v$ and $N_r^h$ are the numbers of vertical and horizontal receive antennas, respectively. We consider the downlink transmission, and the received symbols at the user can be written as \cite{OMP}
\begin{equation}\label{rcv_sig}
\mathbf{r}[k]=\mathbf{W}^H[k](\mathbf{H}[k]\mathbf{F}_{\rm RF}\mathbf{F}_{\rm BB}[k]\mathbf{x}[k]+\mathbf{n}[k]),
\end{equation}
where $1\leq k\leq K$ with $K$ being the number of subcarriers, $\mathbf{F}_{\rm BB}[k]\in\mathbb{C}^{N_t^{\rm RF}\times N_s}$, $\mathbf{F}_{\rm RF}\in\mathbb{C}^{N_t\times N_t^{\rm RF}}$, $\mathbf{W}[k]\in\mathbb{C}^{N_r\times N_s}$, $\mathbf{H}[k]\in\mathbb{C}^{N_r\times N_t}$, $\mathbf{x}[k]\in\mathbb{C}^{N_s\times 1}$, and $\mathbf{n}[k]\in\mathbb{C}^{N_r\times 1}$ are the digital precoder, analog precoder, fully-digital combiner, channel matrix, transmitted signal, and noise associated with the $k$th subcarrier, respectively, and $N_s$ is the number of data streams for each subcarrier. Noise $\mathbf{n}[k]$ satisfies $\mathbf{n}[k]\sim \mathcal{CN}(0,\sigma_n^2)$, and transmitted signal $\mathbf{x}[k]$ satisfies $\mathbb{E}[\mathbf{x}[k]\mathbf{x}^H\![k]]\!\!=\!\!\frac{P}{KN_s}$, where $P$ is average total transmit power.
%\begin{figure}[tb]
%	\centering
%	\includegraphics[width=1\columnwidth, keepaspectratio]{figures/fig1.eps}
%	\caption{MmWave MIMO-OFDM system with hybrid precoding using UPA.}
%\end{figure}

The frequency-domain channel $\mathbf{H}[k]$ can be expressed as $\mathbf{H}[k]=\sum_{d=0}^{D-1}\mathbf{H}_d[d]e^{-j\frac{2\pi k}{K}d}$ \cite{lim_feedback}, where $D$ is the maximum delay spread of the discretized channels, and $\mathbf{H}_d[d]\in\mathbb{C}^{N_r\times N_t}$ is the delay-$d$ channel matrix. We consider the clustered channel model \cite{OMP}, where the channel is composed by $N_{\rm cl}$ clusters of multipaths with $N_{\rm ray}$ rays in each cluster. Thus the delay-$d$ channel matrix can be written as
\begin{equation}\label{H_d}
  \mathbf{H}_d[d]=\sum\nolimits_{i=1}^{N_{\rm cl}}\sum\nolimits_{l=1}^{N_{\rm ray}}p_{i,l}^d[d]\mathbf{a}_r(\phi^r_{i,l},\theta^r_{i,l})\mathbf{a}_t^H(\phi^t_{i,l},\theta^t_{i,l}),
\end{equation}
where $p_{i,l}^d[d]=\sqrt{N_tN_r/(N_{\rm cl}N_{\rm ray})}\alpha_{i,l}p(dT_s-\tau_{i,l})$ is the delay-domain channel coefficient, $\tau _{i,l}$,  $\alpha _{i,l}$, and $p(\tau)$ are the delay, the complex path gain, and the pulse shaping filter for $T_s$-spaced signaling, respectively. Thus the relationship between the frequency-domain channel coefficiency and the delay-domain channel coefficiency is $p_{i,l}[k]=\sum_{d=0}^{D-1}p_{i,l}[d]\exp(-j2\pi kd/K)$. In (\ref{H_d}), $\mathbf{a}_t(\phi^t_{i,l},\theta^t_{i,l})$ and $\mathbf{a}_r(\phi^r_{i,l},\theta^r_{i,l})$ are the steering vectors of the $l$th path in the $i$th cluster at the transmitter and receiver, respectively. In the steering vectors, $\phi^t_{i,l}$ and $\theta^t_{i,l}$ are the azimuth and elevation angles of the $l$th ray in the $i$th cluster for AoDs, and $\phi^r_{i,l}$ and $\theta^r_{i,l}$ are the azimuth and elevation angles of the $l$th ray in the $i$th cluster for AoAs. Therefore, the transmit steering vectors for the UPA at the BS can be expressed as $\mathbf{a}_t(\phi^t_{i,l},\theta^t_{i,l})=
[1\ \ \ \cdots \ \ \ e^{-j2\pi(m\frac{d_h}{\lambda}\sin(\theta ^t_{i,l})\cos(\phi ^t_{i,l})+n\frac{d_v}{\lambda}\sin(\phi ^t_{i,l}))} \ \ \
\cdots \\e^{-j2\pi((N_t^h-1)\frac{d_h}{\lambda}\sin(\theta ^t_{i,l})\cos(\phi ^t_{i,l})+(N_t^v-1)\frac{d_v}{\lambda}\sin(\phi ^t_{i,l}))}]^T/\sqrt{N_t}$ \cite{OMP}, %\cite{OMP}
%\begin{equation}
%\begin{aligned}
%    &\mathbf{a}_t(\phi^t_{i,l},\theta^t_{i,l})=
%    \\&\frac{1}{\sqrt{N_t}}[1\ \cdots \ e^{-j2\pi(m\frac{d_h}{\lambda}\sin(\theta ^t_{i,l})\cos(\phi ^t_{i,l})+n\frac{d_v}{\lambda}\sin(\phi ^t_{i,l}))}
%    \\& \cdots\ e^{-j2\pi((N_t^h-1)\frac{d_h}{\lambda}\sin(\theta ^t_{i,l})\cos(\phi ^t_{i,l})+(N_t^v-1)\frac{d_v}{\lambda}\sin(\phi ^t_{i,l}))}]^T,
%\end{aligned}
%\end{equation}
where $\lambda$ is the carrier wavelength, and $d_v$ and $d_h$ are the distances between adjacent antenna elements in vertical and horizontal direction, respectively. Similarly, we can also obtain $\mathbf{a}_r(\phi ^r_{i,l},\theta^r_{i,l})$ with the same form.

The achievable rate of the mmWave MIMO heavily depends on the transmit hybrid precoder, which can be obtained by  solving the following optimization problem \cite{lim_feedback}
\begin{equation} \label{opt}
\begin{aligned}
\max\limits_{\mathbf{F}_{\rm RF},\mathbf{F}_{\rm BB}}&\!\!\sum\nolimits_{k=1}^{K}\!\!\!\!\!\log_2|\mathbf{I}\!+\!\tfrac{1}{\sigma_n^2}\mathbf{H}[k]\mathbf{F}_{\rm RF}\mathbf{F}_{\rm BB}[k]
\mathbf{F}_{\rm BB}^H[k]\mathbf{F}_{\rm RF}^H\mathbf{H}^H[k]|
\\\text{s.t. }&\mathbf{F}_{\rm RF}\in\mathcal{F}_{\rm RF},\sum\nolimits_{k=1}^K||\mathbf{F}_{\rm RF}\mathbf{F}_{\rm BB}[k]||_F^2=KN_s,
\end{aligned}
\end{equation}
where $\mathcal{F}_{\rm RF}$ is a set of feasible RF precoder satisfying constant-modulus constraint. The coupling between $\mathbf{F}_{\rm RF}$ and $\{\mathbf{F}_{\rm BB}[k]\}_{k=1}^K$ and the constant-modulus constraint of $\mathcal{F}_{\rm RF}$ lead to the challenging hybrid precoder design.

\section{PCA-Based Hybrid Precoder Design for FS}
In this section, we derive hybrid precoders for the PCS, in which only a subset of antennas are connected to each RF chain. Our goal is to design the optimal frequency-flat RF precoder from the fully-digital frequency-selective precoder.
\subsection{Digital Precoder Design}
We first design the digital precoder by fixing the RF precoder. Solving (\ref{opt}) can be difficult due to the coupling of the baseband and RF precoders\cite{lim_feedback}. Therefore, considering $\mathbf{\widetilde{F}}_{\rm BB}[k]=(\mathbf{F}_{\rm RF}^H\mathbf{F}_{\rm RF})^{\frac{1}{2}}\mathbf{F}_{\rm BB}[k]$ to be the equivalent baseband precoder,
the equivalent problem of (\ref{opt}) can be expressed as follows
\begin{equation}
\begin{aligned}\label{opt_RF}
\max\limits_{\mathbf{F}_{\rm RF},\mathbf{\widetilde{F}}_{\rm BB}}&\!\!\sum\nolimits_{k=1}^{K}\!\!\!\!\!\log_2|\mathbf{I}\!+\!\tfrac{1}{\sigma_{\rm n}^2}
\mathbf{H}[k]\mathbf{F}_{\rm RF}(\mathbf{F}_{\rm RF}^H\mathbf{F}_{\rm RF})^{-\frac{1}{2}}\mathbf{\widetilde{F}}_{\rm BB}[k]
\\&\times\mathbf{\widetilde{F}}_{\rm BB}^H[k](\mathbf{F}_{\rm RF}^H\mathbf{F}_{\rm RF})^{-\frac{1}{2}}\mathbf{F}_{\rm RF}^H\mathbf{H}^H[k]|
\\\text{s.t. }&\mathbf{F}_{\rm RF}\in\mathcal{F}_{\rm RF},\sum\nolimits_{k=1}^K||\mathbf{\widetilde{F}}_{\rm BB}[k]||_F^2=KN_s.
\end{aligned}
\end{equation}

For the optimization problem (\ref{opt_RF}), we first consider the optimal solution of $\{\mathbf{\widetilde{F}}_{\rm BB}[k]\}_{k=1}^K$. Specifically, consider the singular value decomposition (SVD) of $\mathbf{H}[k]$ associated with the $k$th subcarrier as $\mathbf{H}[k]=\mathbf{U}[k]\mathbf{\Sigma}[k]\mathbf{V}^H[k]$, and the SVD of the matrix $\mathbf{\Sigma}[k]\mathbf{V}^H[k]\mathbf{F}_{\rm RF}(\mathbf{F}_{\rm RF}^H\mathbf{F}_{\rm RF})^{-1/2}=\mathbf{\widetilde{U}}[k]\mathbf{\widetilde{\Sigma}}[k]
\mathbf{\widetilde{V}}^H[k]$. Therefore, the optimal $\mathbf{\widetilde{F}}_{\rm BB}[k]=[\mathbf{\widetilde{V}}[k]]_{:,1:N_s}\mathbf{\Lambda}[k]$, and thus the optimal baseband precoder $\mathbf{F}_{\rm BB}[k]$ can be expressed as
\begin{equation}\label{F_BB}
\begin{aligned}
\mathbf{F}_{\rm BB}[k]=&(\mathbf{F}_{\rm RF}^H\mathbf{F}_{\rm RF})^{-\frac{1}{2}}\mathbf{\widetilde{F}}_{\rm BB}[k]
\\=&(\mathbf{F}_{\rm RF}^H\mathbf{F}_{\rm RF})^{-\frac{1}{2}}[\mathbf{\widetilde{V}}[k]]_{:,1:N_s}\mathbf{\Lambda}[k],
\end{aligned}
\end{equation}
where $\mathbf{\Lambda}[k]=(\mu -N_s/[\mathbf{\widetilde{\Sigma}}[k]]_{i,i}^2)^+$ ($1\leq i\leq N_s$, $1\leq k\leq K$) is a water-filling solution matrix, in which $\mu$ satisfies $\sum_{k=1}^K\sum_{i=1}^{N_s}(\mu-N_s/[\mathbf{\widetilde{\Sigma}}[k]]_{i,i}^2)^+=KN_s$.
Then the problem reduces to obtain the optimal solution of $\mathbf{F}_{\rm RF}$ to (\ref{opt_RF}).
\subsection{PCA-Based Precoder Design}
Regarding the transmit hybrid precoder for FS, there are $N_t$ antennas and $N_t^{\rm RF}$ RF chains. For simplicity, we consider the numbers of antennas for different RF chains are identical, and the cardinality of each subset for every antenna group is $N_t^{\rm sub}=N_t/N_t^{\rm RF}$. We define the set of antenna indexes as $\{1,\cdots,N_t\}$, and $\mathcal{S}_r$ as the subset of the antennas associated with the $r$th RF chain, where $\mathcal{S}_r=\{(r-1)N_t^{\rm sub}+1,\cdots,rN_t^{\rm sub}\}$, for $1\leq r\leq N_t^{\rm RF}$. For the FS, the analog precoder $\mathbf{F}_{\rm RF}$ can be written as a block diagonal matrix $\mathbf{F}_{\rm RF}=\text{blkdiag}(\mathbf{f}_{{\rm RF},\mathcal{S}_1},\cdots,\mathbf{f}_{{\rm RF},\mathcal{S}_{N_t^{\rm RF}}})$, where $\mathbf{f}_{{\rm RF},\mathcal{S}_r}\in\mathbb{C}^{N_t^{\rm sub}\times 1}$ is the analog beamforming vector associated with the $r$th subarray for the $r$th RF chain. Defining optimal digital precoder $\mathbf{F}_{\rm opt}[k]=[\mathbf{V}[k]]_{:,1:N_s}$ for $1\leq k\leq K$, the optimal digital precoder can be expressed as
\begin{equation}\label{blk_F_opt}
\mathbf{F}_{\rm opt}^H[k]=\begin{bmatrix}
\mathbf{F}_{{\rm opt},\mathcal{S}_1}^H[k] \
\cdots \
\mathbf{F}_{{\rm opt},\mathcal{S}_{N_t^{\rm RF}}}^H[k]
\end{bmatrix},
\end{equation}
where $\mathbf{F}_{{\rm opt},\mathcal{S}_r}[k]\in\mathbb{C}^{N_t^{\rm sub}\times N_t^{\rm RF}}$. Moreover, we regard the matrix $\mathbf{F}_{\mathcal{S}_r}=\begin{bmatrix}\mathbf{F}_{{\rm opt},\mathcal{S}_r}[1]\ \mathbf{F}_{{\rm opt},\mathcal{S}_r}[2]\ \cdots\ \mathbf{F}_{{\rm opt},\mathcal{S}_r}[K]\end{bmatrix}$
consisting of the optimal precoder of all subcarriers in the $r$th subarray as the data set in the PCA problem \cite{mach_lern}. Additionally, to achieve the stable solution with low complexity for PCA, SVD is applied to the data set matrix $\mathbf{F}_{\mathcal{S}_r}$. This process is detailed in Proposition 1, where its optimality is also verified as follows.
\newtheorem{prop}{\textbf{Proposition}}
\begin{prop}
    For FS, considering $\mathbf{F}_{\mathcal{S}_r}=\begin{bmatrix}
	\mathbf{F}_{{\rm opt},\mathcal{S}_r}[1] &\cdots& \mathbf{F}_{{\rm opt},\mathcal{S}_r}[K]\end{bmatrix}$, the RF precoder $\mathbf{F}_{\rm RF}$ solving problem (\ref{opt_final_eq}) with the subarray analog/digital architecture is given by $\mathbf{F}=\text{\rm blkdiag}(\mathbf{f}_{{\rm RF},\mathcal{S}_1},\cdots,\mathbf{f}_{{\rm RF},\mathcal{S}_{N_t^{\rm RF}}})$, with $\mathbf{f}_{{\rm RF},\mathcal{S}_r}=\alpha_r\mathbf{u}_{\mathcal{S}_r}$, for $r=1,\cdots,N_t^{\rm RF}$, where $\alpha_r\in\mathbb{C}$ and $\mathbf{u}_{\mathcal{S}_r}$ is the right singular vector corresponding with the largest singular value of the matrix $\mathbf{F}_{\mathcal{S}_r}$.
\end{prop}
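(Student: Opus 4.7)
The plan is to reduce the sum-rate maximization in (\ref{opt_RF}) to a per-subarray rank-one matrix approximation problem, then invoke the Eckart--Young theorem. First, I would exploit the standard equivalence (established for narrowband hybrid precoding in the OMP literature and extended subcarrier-wise to the broadband case) that, once the waterfilling baseband solution in (\ref{F_BB}) is plugged in, maximizing the mutual information is well-approximated by minimizing the Euclidean distance between the hybrid and fully-digital precoders summed over subcarriers, i.e.\
\begin{equation}\nonumber
\min_{\mathbf{F}_{\rm RF},\{\mathbf{F}_{\rm BB}[k]\}} \sum_{k=1}^{K}\bigl\|\mathbf{F}_{\rm opt}[k]-\mathbf{F}_{\rm RF}\mathbf{F}_{\rm BB}[k]\bigr\|_F^{2}.
\end{equation}
This is the surrogate problem I will identify with (\ref{opt_final_eq}).

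Next, I would exploit the FS block-diagonal structure $\mathbf{F}_{\rm RF}=\text{blkdiag}(\mathbf{f}_{{\rm RF},\mathcal{S}_1},\ldots,\mathbf{f}_{{\rm RF},\mathcal{S}_{N_t^{\rm RF}}})$ together with the row partition (\ref{blk_F_opt}) of $\mathbf{F}_{\rm opt}[k]$. Because each subarray touches disjoint rows of the precoder, the Frobenius cost splits additively:
\begin{equation}\nonumber
\sum_{k=1}^{K}\bigl\|\mathbf{F}_{\rm opt}[k]-\mathbf{F}_{\rm RF}\mathbf{F}_{\rm BB}[k]\bigr\|_F^{2}
=\sum_{r=1}^{N_t^{\rm RF}}\sum_{k=1}^{K}\bigl\|\mathbf{F}_{{\rm opt},\mathcal{S}_r}[k]-\mathbf{f}_{{\rm RF},\mathcal{S}_r}\mathbf{b}_r^{H}[k]\bigr\|_F^{2},
\end{equation}
where $\mathbf{b}_r^{H}[k]$ denotes the $r$th row of $\mathbf{F}_{\rm BB}[k]$. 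Thus the global problem decouples into $N_t^{\rm RF}$ independent subproblems, each asking for the best rank-one factorization of a tall matrix.

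Stacking the $K$ subcarrier slices into the wide matrix $\mathbf{F}_{\mathcal{S}_r}=[\mathbf{F}_{{\rm opt},\mathcal{S}_r}[1]\ \cdots\ \mathbf{F}_{{\rm opt},\mathcal{S}_r}[K]]$ converts the $r$th subproblem into
\begin{equation}\nonumber
\min_{\mathbf{f}_{{\rm RF},\mathcal{S}_r},\mathbf{b}_r}\bigl\|\mathbf{F}_{\mathcal{S}_r}-\mathbf{f}_{{\rm RF},\mathcal{S}_r}\mathbf{b}_r^{H}\bigr\|_F^{2},
\end{equation}
which is exactly the rank-one truncation problem solved by PCA/SVD. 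By the Eckart--Young theorem the optimizer is $\mathbf{f}_{{\rm RF},\mathcal{S}_r}\propto\mathbf{u}_{\mathcal{S}_r}$, the dominant singular vector of $\mathbf{F}_{\mathcal{S}_r}$, with an arbitrary scalar $\alpha_r\in\mathbb{C}$ absorbed into the paired baseband row. This yields precisely the claimed form.

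The main obstacle I anticipate is justifying the reduction to the Frobenius surrogate rigorously under the constant-modulus constraint $\mathbf{F}_{\rm RF}\in\mathcal{F}_{\rm RF}$: strictly speaking the optimal PCA direction $\mathbf{u}_{\mathcal{S}_r}$ need not have unit-modulus entries, so the proposition's ``optimality'' should be understood as optimality of the direction within the relaxed subspace, with the final phase-extraction step (replacing each entry by its normalized phase) handled afterwards. I would address this by arguing, as in the PCA-based narrowband literature, that the phase-only projection preserves the dominant principal direction up to a vanishing loss when $N_t^{\rm sub}$ is moderately large, so that the direction identified by PCA remains the SE-maximizing choice among feasible analog vectors.
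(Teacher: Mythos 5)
Your proposal is correct and takes essentially the same route as the paper: both reduce the rate maximization to the standard OMP-style projection surrogate, exploit the block-diagonal structure of $\mathbf{F}_{\rm RF}$ to decouple the objective into independent problems over the disjoint row blocks $\mathbf{F}_{{\rm opt},\mathcal{S}_r}[k]$, and identify the optimal analog direction per subarray as the dominant singular vector of the stacked matrix $\mathbf{F}_{\mathcal{S}_r}$. The only cosmetic difference is that you obtain this via a rank-one Eckart--Young approximation jointly over the analog vector and the baseband row, whereas the paper first substitutes the semi-unitary optimal baseband and then maximizes the Rayleigh quotient $\mathbf{f}^H\mathbf{F}_{\mathcal{S}_r}\mathbf{F}_{\mathcal{S}_r}^H\mathbf{f}/|\mathbf{f}|^2$ --- the two yield the same singular vector, and your closing caveat about deferring the constant-modulus projection is exactly how the paper handles it in (\ref{angle_sel}).
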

\begin{proof}
	Following the similar steps of the equations (12)-(14) in \cite{OMP} and defining $[\mathbf{\Sigma}[k]]_{1:N_s,1:N_s}=\mathbf{\Sigma}_1[k]$, the objective function in problem (\ref{opt}) can be approximate as
\begin{equation}
\begin{aligned}
&\sum\nolimits_{k=1}^{K}\log_2|\mathbf{I}+\tfrac{1}{\sigma_n^2}\mathbf{H}[k]\mathbf{F}_{\rm RF}\mathbf{F}_{\rm BB}[k]
\mathbf{F}_{\rm BB}^H[k]\mathbf{F}_{\rm RF}^H\mathbf{H}^H[k]|
\\\approx&\!\sum\nolimits_{k=1}^{K}\!(\log_2\!|\mathbf{I}_{N_s}\!\!+\!\!\tfrac{1}{\sigma_n^2}
\!\mathbf{\Sigma}_1^2[k]|
%\\&
\!-\!(\!N_s\!\!-\!\!||\mathbf{F}_{\rm opt}^H\![k]\mathbf{F}_{\rm RF}\mathbf{F}_{\rm BB}[k]||_F^2)\!).
\end{aligned}
\end{equation}
Therefore, the optimization problem (\ref{opt}) is equivalent to the following optimization problem
\begin{equation} \label{opt_final}
\begin{aligned}
\max\limits_{\mathbf{F}_{\rm RF},\mathbf{F}_{\rm BB}}&\sum\nolimits_{k=1}^{K}||\mathbf{F}_{\rm opt}^H[k]\mathbf{F}_{\rm RF}\mathbf{F}_{\rm BB}[k]||_F^2
\\\text{s.t. }&\mathbf{F}_{\rm RF}\in\mathcal{F}_{\rm RF},\sum\nolimits_{k=1}^K||\mathbf{F}_{\rm RF}\mathbf{F}_{\rm BB}[k]||_F^2=KN_s,
\end{aligned}
\end{equation}
where $\mathcal{F}_{\rm RF}$ is a set of feasible RF precoder satisfying constant-modulus constraint.
The objective function in (\ref{opt_final}) is
\begin{equation}%\label{opt}
\begin{aligned}
\sum\nolimits_{k=1}^{K}&\!\!\!\!\!\!
||\mathbf{F}_{\rm opt}^H\![k]\mathbf{F}_{\rm RF}\mathbf{F}_{\rm BB}[k]||_F^2
\!=\!\!\sum\nolimits_{k=1}^{K}\!\!\!\!\!\!\!\!\text{Tr}(\mathbf{F}_{\rm opt}^H\![k]\mathbf{F}_{\rm RF}\!(\mathbf{F}_{\rm RF}^H\mathbf{F}_{\rm RF}\!)\!^{-\!\frac{1}{2}}
\\&\times(\mathbf{\widetilde{F}}_{\rm BB}[k]\mathbf{\widetilde{F}}_{\rm BB}^H[k])
(\mathbf{F}_{\rm RF}^H\mathbf{F}_{\rm RF})^{-\frac{1}{2}}\mathbf{F}_{\rm RF}^H\mathbf{F}_{\rm opt}[k]).
\end{aligned}
\end{equation}
According to previous work \cite{uni_cons}, unitary constraints offer a close performance to the total power constraint while providing a relatively simple form of solution. To simplify the problem, we consider condition under unitary power constraints instead. Therefore, water-filling power allocation coefficients can be ignored. In detail, the equivalent baseband precoder $\mathbf{\widetilde{F}}_{\rm BB}[k]=[\mathbf{\widetilde{V}}[k]]_{:,1:N_s}$, which means that $\mathbf{\widetilde{F}}_{\rm BB}[k]$ is a unitary or simi-unitary matrix depending on the relationship between $N_s$ and $N_t^{\rm RF}$. When $N_s=N_t^{\rm RF}$, $\mathbf{\widetilde{F}}_{\rm BB}[k]\mathbf{\widetilde{F}}_{\rm BB}^H[k]$ is $\mathbf{I}_{N_s}$. When $N_s<N_t^{\rm RF}$, denoting the SVD of $\mathbf{\widetilde{F}}_{\rm BB}[k]=\mathbf{U}_{\rm BB}[k]\begin{bmatrix}
\mathbf{I_{N_s}} \ \mathbf{0} \end{bmatrix}^T\mathbf{V}_{\rm BB}^H[k]$, thus $\mathbf{\widetilde{F}}_{\rm BB}[k]\mathbf{\widetilde{F}}_{\rm BB}^H[k]=\mathbf{U}_{\rm BB}[k]\text{blkdiag}(\mathbf{I}_{N_s},\mathbf{0}_{N_t^{\rm RF}-N_s})\mathbf{U}_{\rm BB}^H[k]$. Therefore, the solution to the condition when $N_s=N_t^{\rm RF}$ will also suffice the condition when $N_s<N_t^{\rm RF}$. Therefore, the objective function of (\ref{opt_final}) goes down to
\begin{equation}\label{opt_final_eq}
\begin{aligned}
&\text{Tr}(\mathbf{F}_{\rm opt}^H[k]\mathbf{F}_{\rm RF}(\mathbf{F}_{\rm RF}^H\mathbf{F}_{\rm RF})^{-\frac{1}{2}}(\mathbf{F}_{\rm RF}^H\mathbf{F}_{\rm RF})^{-\frac{1}{2}}\mathbf{F}_{\rm RF}^H\mathbf{F}_{\rm opt}[k])
\\=&||\mathbf{F}_{\rm opt}^H[k]\mathbf{F}_{\rm RF}(\mathbf{F}_{\rm RF}^H\mathbf{F}_{\rm RF})^{-\frac{1}{2}}||_F^2.
\end{aligned}
\end{equation}
For simplicity, we denote $\mathbf{F}_{\rm RF}{(\mathbf{F}_{\rm RF}^H\mathbf{F}_{\rm RF})}^{-1/2}$ as $\mathbf{\bar{F}}_{\rm RF}$. Therefore, $\mathbf{\bar{F}}_{\rm RF}$ can be written into following block diagram matrix
\begin{equation}\label{F_bar}
\mathbf{\bar{F}}_{\rm RF} \!=\!\text{blkdiag}(\mathbf{f}_{{\rm RF},\mathcal{S}_1}|\mathbf{f}_{{\rm RF},\mathcal{S}_1}|^{-1}\!,\cdots,\mathbf{f}_{{\rm RF},\mathcal{S}_{N_t^{\rm RF}}}|\mathbf{f}_{{\rm RF},\mathcal{S}_{N_t^{\rm RF}}}|^{-1}).
\end{equation}
By substituting (\ref{F_bar}) and (\ref{blk_F_opt}) into (\ref{opt_final_eq}), the objective function of the optimization problem can be further expressed as
	\begin{equation}
	\begin{aligned}
	&\sum\nolimits_{k=1}^{K}||\mathbf{F}_{\rm opt}^H[k]\mathbf{\bar{F}}_{\rm RF}||_F^2
	\\=&\sum_{k=1}^{K}||\begin{bmatrix}
	\frac{\mathbf{f}_{{\rm RF},\mathcal{S}_1}\mathbf{F}_{{\rm opt},\mathcal{S}_1}^H[k]}{|\mathbf{f}_{{\rm RF},\mathcal{S}_1}|} &
	\cdots &
	\frac{\mathbf{f}_{{\rm RF},\mathcal{S}_{N_t^{\rm RF}}}\mathbf{F}_{{\rm opt},\mathcal{S}_{N_t^{\rm RF}}}^H[k]}{|\mathbf{f}_{{\rm RF},\mathcal{S}_{N_t^{\rm RF}}}|}
	\end{bmatrix}||_F^2
	%\\=&\sum_{r=1}^{N_t^{\rm RF}}\frac{\sum_{k=1}^{K}|\mathbf{f}_{{\rm RF},\mathcal{S}_r}\mathbf{F}_{{\rm opt},\mathcal{S}_r}^H[k]|^2}{|\mathbf{f}_{{\rm RF},\mathcal{S}_r}|^2}=\sum_{r=1}^{N_t^{\rm RF}}\frac{|\mathbf{f}_{{\rm RF},\mathcal{S}_r}\mathbf{F}_{\mathcal{S}_r}^H|^2}{|\mathbf{f}_{{\rm RF},\mathcal{S}_r}|^2}
	\\=&\sum_{r=1}^{N_t^{\rm RF}}\frac{\mathbf{f}_{{\rm RF},\mathcal{S}_r}\mathbf{F}_{\mathcal{S}_r}^H\mathbf{F}_{\mathcal{S}_r}\mathbf{f}_{{\rm RF},\mathcal{S}_r}^H}{|\mathbf{f}_{{\rm RF},\mathcal{S}_r}|^2}.
	\end{aligned}
	\end{equation}
	Therefore, the solution to the optimization problem (\ref{opt_final_eq}) is $\max_{\mathbf{F}_{\rm RF}}\sum_{k=1}^{K}||\mathbf{\bar{F}}_{\rm RF}\mathbf{F}_{\rm opt}^H[k]||_F^2=\sum_{r=1}^{N_t^{\rm RF}}\lambda_1^2(\mathbf{F}_{\mathcal{S}_r})$. The maximum value can only be obtained when $\mathbf{f}_{{\rm RF},\mathcal{S}_r}=\alpha_r\mathbf{u}_{\mathcal{S}_r,1}$, where $\alpha_r$ is an arbitrary complex value, and $\mathbf{u}_{\mathcal{S}_r}$ is the largest singular value of the matrix $\mathbf{F}_{\mathcal{S}_r}$.
\end{proof}

Taking the constraint of RF precoder into account, we can design the RF precoder by solving
\begin{equation}\label{angle_sel}
\begin{aligned}
\mathbf{F}_{\rm RF}&=\text{blkdiag}(\mathbf{f}_{{\rm RF},\mathcal{S}_1},\cdots,\mathbf{f}_{{\rm RF},\mathcal{S}_{N_t^{\rm RF}}})
\\\text{where, }&\mathbf{f}_{{\rm RF},\mathcal{S}_r}=\arg\min_{\mathbf{x},|[\mathbf{x}]_{i,j}|=1/\sqrt{N_{\rm sub}}}||\mathbf{x}-
\mathbf{u}_{\mathcal{S}_r}||^2_F,
\\&\text{for }r=1,\cdots,N_t^{\rm RF}.
\end{aligned}
\end{equation}
With the constant-modulus constraint, the set of possible $\mathbf{f}_{{\rm RF},\mathcal{S}_r}$ is actually a hypersphere in the space of $\mathbb{C}^{N_t\times 1}$, and $\mathbf{u}_{\mathcal{S}_r}$ is a known point in the space of $\mathbb{C}^{N_t\times 1}$. Therefore, the optimization problem in (\ref{angle_sel}) is actually a distance minimization problem. Therefore, the solution is the point on this hypersphere sharing same direction of the know point $[\mathbf{f}_{{\rm RF},\mathcal{S}_r}]_{i}=\sqrt{N_{\rm sub}}e^{j\angle([\mathbf{u}_{\mathcal{S}_r}]_{i})}$.

When the quantization of phase shifters is considered, we assume the quantization bits are $Q$. Therefore, the phase shifters can only be chosen from the following quantized phase set $\mathcal{Q}=\{0,\frac{2\pi}{2^Q},\cdots,\frac{2\pi(2^Q-1)}{2^Q}\}$. Specifically, after obtaining the RF precoder $\mathbf{F}_{\rm RF}$, the quantization process can be realized by searching for the minimum Euclidean distance between $\angle([\mathbf{F}_{\rm RF}]_{i,j})$ and quantized phase from $\mathcal{Q}$.

\section{Shared-AHC Algorithm for DS Grouping}
In Section III, we have found the SE performance of FS heavily depends on $\{\mathbf{F}_{\mathcal{S}_r}\}_{r=1}^{N_t^{\rm RF}}$. This observation motivates us to optimize the antenna grouping $\{\mathcal{S}_r\}_{r=1}^{N_t^{\rm RF}}$ to further improve the SE performance when DS is considered.

The DS problem can be formulated as follows
\begin{equation}\label{pro_dy}
\begin{aligned}
&\max\limits_{\mathcal{S}_1,\cdots,\mathcal{S}_{N_t^{\rm RF}}}\sum\nolimits_{r=1}^{N_t^{\rm RF}}\lambda_1^2(\mathbf{F}_{\mathcal{S}_r})
\\&\text{s.t. }\cup_{r=1}^{N_t^{\rm RF}}\!\mathcal{S}_r\!=\!\{1,\!\cdots\!,\!N_t\},
\ \mathcal{S}_i\!\cap\!\mathcal{S}_j\!=\!\emptyset\text{ for }i\!\not=\!j,\ \mathcal{S}_r\!\not=\!\emptyset\ \forall r.
\end{aligned}
\end{equation}
This optimization problem is a combinational optimization problem, which requires an exhaustive search to reach the optimal solution. To obtain the optimal solution, the number of all possible combinations for exhaustive search can be $\frac{1}{(N_t^{\rm RF})!}\sum_{n=0}^{N_r^{\rm RF}}(-1)^{N_t^{\rm RF}-n}\binom{N_t^{\rm RF}}{n}n^{N_t}$ according to \cite{S_num}, which is a very large number. To illustrate, when $N_t=64$ and $N_t^{\rm RF}=4$, the number of all possible combinations can be up to $1.4178\times 10^{37}$.

Therefore, a low-complexity algorithm need to be develop to solve problem (\ref{pro_dy}). Specifically, we use the Minkowski $\ell_2$-norm \cite{norm} to estimate the square of the singular value of the matrix $\mathbf{F}_{\mathcal{S}_r}$ by
$\lambda_1^2(\mathbf{F}_{\mathcal{S}_r})=\lambda_1(\mathbf{R}_{\mathcal{S}_r})\approx \frac{1}{|\mathcal{S}_r|}\sum_{i=1}^{|\mathcal{S}_r|}\sum_{j=1}^{|\mathcal{S}_r|}|[\mathbf{R}_{\mathcal{S}_r}]_{i,j}|
=\frac{1}{|\mathcal{S}_r|}\sum_{i\in\mathcal{S}_r }\sum_{j\in\mathcal{S}_r}|[\mathbf{R}_F]_{i,j}|$, where $\mathbf{R}_{\mathcal{S}_r}=\mathbf{F}_{\mathcal{S}_r}\mathbf{F}_{\mathcal{S}_r}^H$ and $\mathbf{R}_F=\mathbf{F}\mathbf{F}^H$.

To reduce the complexity while achieve the good SE performance, we consider the antenna grouping from the viewpoint of clustering analysis in machine learning. To be specific, we propose a shared-AHC algorithm as listed in Algorithm \ref{alg:AHC}, which is developed from the AHC algorithm in machine
learning to group the antennas into different subarrays associated with different RF chains. Traditional AHC algorithm is a clustering algorithm that builds a cluster hierarchy from the bottom up. It starts by adding all data to multiple clusters, followed by iteratively pair-wise merging these clusters until only one cluster is left at the top of the hierarchy. The shared-AHC algorithm is different from the traditional AHC algorithm \cite{AHC} in two distinguished aspects. First, the aim of clustering in our antenna grouping problem is to build $N_t^{\rm RF}$ clusters instead of only one cluster in conventional AHC algorithm. Second, the pair-wise merging criterion in the proposed algorithm is ``shared", while the conventional AHC algorithm only considers the target cluster. To further illustrate the ``shared" mechanism, we  introduce the metric of mutual correlation $g(\mathcal{S}_n,\mathcal{S}_m)$ between the cluster $\mathcal{S}_n$ and $\mathcal{S}_m$
\begin{equation}\label{mul_cor}
g(\mathcal{S}_n,\mathcal{S}_m)=\frac{1}{|\mathcal{S}_n||\mathcal{S}_m|}
\sum_{i\in\mathcal{S}_n}\sum_{j\in\mathcal{S}_m}|[\mathbf{R}_F]_{i,j}|.
\end{equation}
In each clustering iteration, we first focus on a cluster $\mathcal{S}_n$, and find a cluster $\mathcal{S}_m$ maximizes $g(\mathcal{S}_n,\mathcal{S}_l)$ among all possible $\mathcal{S}_l$. If the cluster $\mathcal{S}_n$ also maximizes $g(\mathcal{S}_m,\mathcal{S}_l)$ among all possible $\mathcal{S}_l$, we merge $\mathcal{S}_n$ and $\mathcal{S}_m$. Otherwise, the cluster $\mathcal{S}_n$ and cluster $\mathcal{S}_m$ are not merged and algorithm goes into the next iteration. Therefore, our proposed algorithm is featured as ``shared", since two clusters mutually share the maximum correlation in the sense of (\ref{mul_cor}). This process is realized in Algorithm 1.
\begin{algorithm}[htp]
	\caption{Shared Agglomerative Hierarchical Clustering (Shared-AHC) Algorithm for DS Grouping.}
	\label{alg:AHC}
	\begin{algorithmic}[1]
		\renewcommand{\algorithmicrequire}{\textbf{Input:}}
		\renewcommand\algorithmicensure {\textbf{Output:}}
		\Require
		$\mathbf{R}_F$, number of antennas and RF chains $N_t$, $N_t^{\rm RF}$.
		\Ensure
		Grouping result $\mathcal{S}_1,\cdots,\mathcal{S}_{N_t^{\rm RF}}$.
		\State $N_{\rm sub}=N_t$, $\mathcal{S}_i=\{i\}$ for $i=1,\cdots,N_t$
		\While{$N_{\rm sub}>N_t^{\rm RF}$}
		\State $\mathcal{S}_i^0=\mathcal{S}_i$ for $i=1,\cdots,N_{\rm sub}$, $n_{\rm sub}=1$
		\For{$i=1:N_{\rm sub}$}
		\State \textbf{if} $\exists r_0\text{ s.t. }\mathcal{S}_i^0\in\mathcal{S}_{r_0}$ \textbf{then continue}
		\State \textbf{else if} $i=N_{\rm sub}$ \textbf{then} $\mathcal{S}_{n_{\rm sub}}=\mathcal{S}_i^0$
		\State \textbf{else}
		\State \qquad $j=\arg\max\limits_{l\in\{i+1,\cdots,N_{\rm sub}\}}g(\mathcal{S}_i,\mathcal{S}_l)$
        \State \qquad $i^0=\arg\max\limits_{l\in\{1,\cdots,N_{\rm sub}\}\setminus\{j\}}g(\mathcal{S}_j,\mathcal{S}_l)$
		\State \qquad \textbf{if} $i=i^0$ \textbf{then} $\mathcal{S}_{n_{\rm sub}}=\mathcal{S}_i\cup\mathcal{S}_j$
		\State \qquad \textbf{else} $\mathcal{S}_{n_{\rm sub}}=\mathcal{S}_i$
		\State \qquad\textbf{end if}
		\State \textbf{end if}
		\State $n_{\rm sub}=n_{\rm sub}+1$
		\EndFor
		\State $N_{\rm sub}^0=n_{\rm sub}-1$
		\State \textbf{if} {$N_{\rm sub}^0<N_r^{\rm RF}$} \textbf{then} $\mathcal{S}_i=\mathcal{S}_i^0$ for $i=1,\cdots,N_{\rm sub}$
        \State \qquad\textbf{break}
		\State \textbf{else} $N_{\rm sub}=N_{\rm sub}^0$
		\State \textbf{end if}
		\EndWhile
		\If{$N_{\rm sub}>N_t^{\rm RF}$}
		\State Sort $\mathcal{S}_i$ according to the ascending order of cardinality
		\For{$i=1:N_t^{\rm RF}-N_{\rm sub}$}
		\State $j=\arg\max\limits_{l=\{N_t^{\rm RF}-N_{\rm sub}+1,\cdots,N_{\rm sub}\}}g(\mathcal{S}_i,\mathcal{S}_l)$
		\State $\mathcal{S}_i=\mathcal{S}_i\cup\mathcal{S}_j$
		\EndFor
		\State Rearrange the subscript to guarantee that the order of subscripts are from 1 to $N_t^{\rm RF}$
		\EndIf
	\end{algorithmic}
\end{algorithm}
\section{Energy Efficiency Analysis}
The implementation of PCS not only reduces
the hardware complexity, but also improves the EE.
In this section, we analyze the EE of the
designs. Define the EE as $\eta =RB/P$,
%\begin{equation}%\label{}
%\eta =\frac{RB}{P}
%\end{equation}
where $B$ is the bandwidth of the channel, and $P$ is the total power consumption of the system.

Different connection patterns between
the phase shifters and antennas can influence
the power consumption. Because the number of phase shifters is different in different connection patterns. In this system, FCA use up to $N_tN_t^{\rm RF}$ phase shifter for each RF chain
connecting to every antennas. While the PCS use $N_t$ phase shifters.

Furthermore, the different antenna architectures should also be taken into account regarding the  total power. Specifically, we consider the hybrid MIMO system using passive antennas and active antennas as shown in Fig. 10 of \cite{ant_str}. Both of them consist electronic components such as digital-analog convertors (DAC), power amplifiers (PA), local oscillators (LO), and mixers etc. The main difference between active and passive antenna architecture lies in the number of the PAs. In passive antennas, the number of PAs is the same as that of the RF chains. While for active antennas, the number of PAs is the same as that of antennas. This difference can lead to different power consumption because the PAs are heavily power-consuming. Thus we will analyze the power consumption of the two different antenna architectures, respectively.
%\begin{figure}[tb]
%	\centering
%	\includegraphics[width=1\columnwidth, keepaspectratio]{figures/antenna.eps}
%	\caption{Different antenna architectures: (a) Hybrid MIMO with passive antennas; (b) Hybrid MIMO with active antennas; (c) Fully-digital MIMO array \cite{ant_str}. ``Ant" in this figure denotes antenna.}\label{ant}
%\end{figure}

Given the above antenna architecture, the power consumption for FCA and PCS are respectively
$P_{\rm FCA}^p=N_tN_t^{\rm RF}P_{\rm PS}\!+\!N_t^{\rm RF}(P_{\rm DAC}\!+\!P_{\rm mix}\!+\!P_{\rm PA}\!+\!P_{\rm LO})$ and $P_{\rm PCS}^p=N_tP_{\rm PS}\!+\!N_t^{\rm RF}(P_{\rm DAC}+P_{\rm mix}\!+\!P_{\rm PA}\!+\!P_{\rm LO})$. By contrast, the power consumption for FCA and PCS with active antenna architecture are $P_{\rm FCA}^a=N_tN_t^{\rm RF}P_{\rm PS}\!+\!N_tP_{\rm PA}\!+\!N_t^{\rm RF}(P_{\rm DAC}\!+\!P_{\rm mix}\!+\!P_{\rm LO}\!)$ and
$P_{\rm PCS}^a=N_tP_{\rm PS}\!+\!N_tP_{\rm PA}\!+\!N_t^{\rm RF}(P_{\rm DAC}\!+\!P_{\rm mix}\!+\!P_{\rm LO})$.
Moreover, according to the antenna architecture for fully-digital (FD), the power consumption is
$P_{\rm FD}=N_t(P_{\rm PA}+P_{\rm DAC}+P_{\rm mix}+P_{\rm LO})$.
Additionally, the power consumption of electronic components in the three architectures are phase shifter $P_{\rm PS}=30$ mW \cite{27}, DAC $P_{\rm DAC}=200$ mW \cite{27}, mixer $P_{\rm mix}=39$ mW \cite{39}, LO $P_{\rm LO}=5$ mW \cite{27}, and PA $P_{\rm PA}=138$ mW \cite{36}.
\section{Simulations}
In this section, we investigate the SE and EE performance for the hybrid precoder design. For the channel model, we adopt Dirac delta function as the pulse shaping filter and a cyclic prefix with the length of $D=64$. The number of subcarriers is $K=512$. The transmission bandwidth is $B=500$ MHz. We consider that the path delay is uniformly distributed in $[0,DT_s]$ ($T_s=1/B$ is the symbol period). The number of the clusters is $N_{\rm cl}=8$, and azimuth/elevation AoAs and AoDs follow the uniform distribution $\mathcal{U}[-\pi/2, \pi/2]$ with angle spread of $7.5^{\circ}$. Within each cluster, there are $N_{\rm ray}=10$ rays. As for the antennas, we consider transmitter adopt $8\times8$ UPA with hybrid precoder, the receiver adopt $2\times2$ UPA with fully-digital combiner, and the distance between each adjacent antennas is half wavelength. Moreover, we consider the number of RF chains at transmitter is $N_t^{\rm RF}=4$ and the data stream is $N_s=3$. Additionally, we consider 4 types of classical FS patterns shown in Fig. \ref{FS}, where antenna elements with the same color share the same RF chain.
\begin{figure}[t]
	\centering
	\includegraphics[width=1\columnwidth, keepaspectratio]{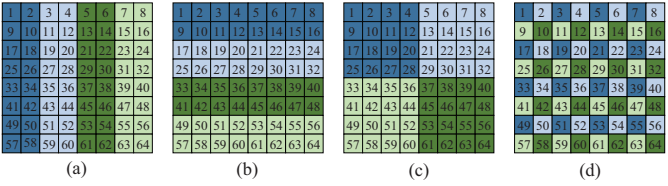}
	\caption{Four types of FS: (a) Vertical type; (b) Horizontal type; (c) Squared type; (d) Interlaced type.}\label{FS}
\end{figure}

Throughout this part, following baselines will be considered
for performance benchmarks: \textbf{Optimal fully-digital} scheme considers the fully-digital MIMO system, where the SVD-based precoder/combiner is adopted as the performance upper bound. \textbf{Simultaneous OMP (SOMP)} scheme is an extension version of the narrow-band OMP-based spatially sparse precoding in \cite{OMP}. In broadband, SOMP-based hybrid precoding scheme can simultaneously design the RF precoder/combiner for all subcarriers. \textbf{Discrete Fourier transform (DFT) codebook} scheme designs the RF precoder/combiner from the DFT codebook instead of steering vectors codebook in SOMP scheme \cite{mao}. \textbf{Covariance eigenvalue decomposition (EVD)} scheme estimates the covariance matrix of the channels using the mean of auto-correlation matrices at each subcarrier \cite{170825}. The RF precoder is designed based on the EVD of the covariance matrix of the channels.

\begin{figure}[tb]
    \centering \subfigure{\includegraphics[width=.49\columnwidth, keepaspectratio]{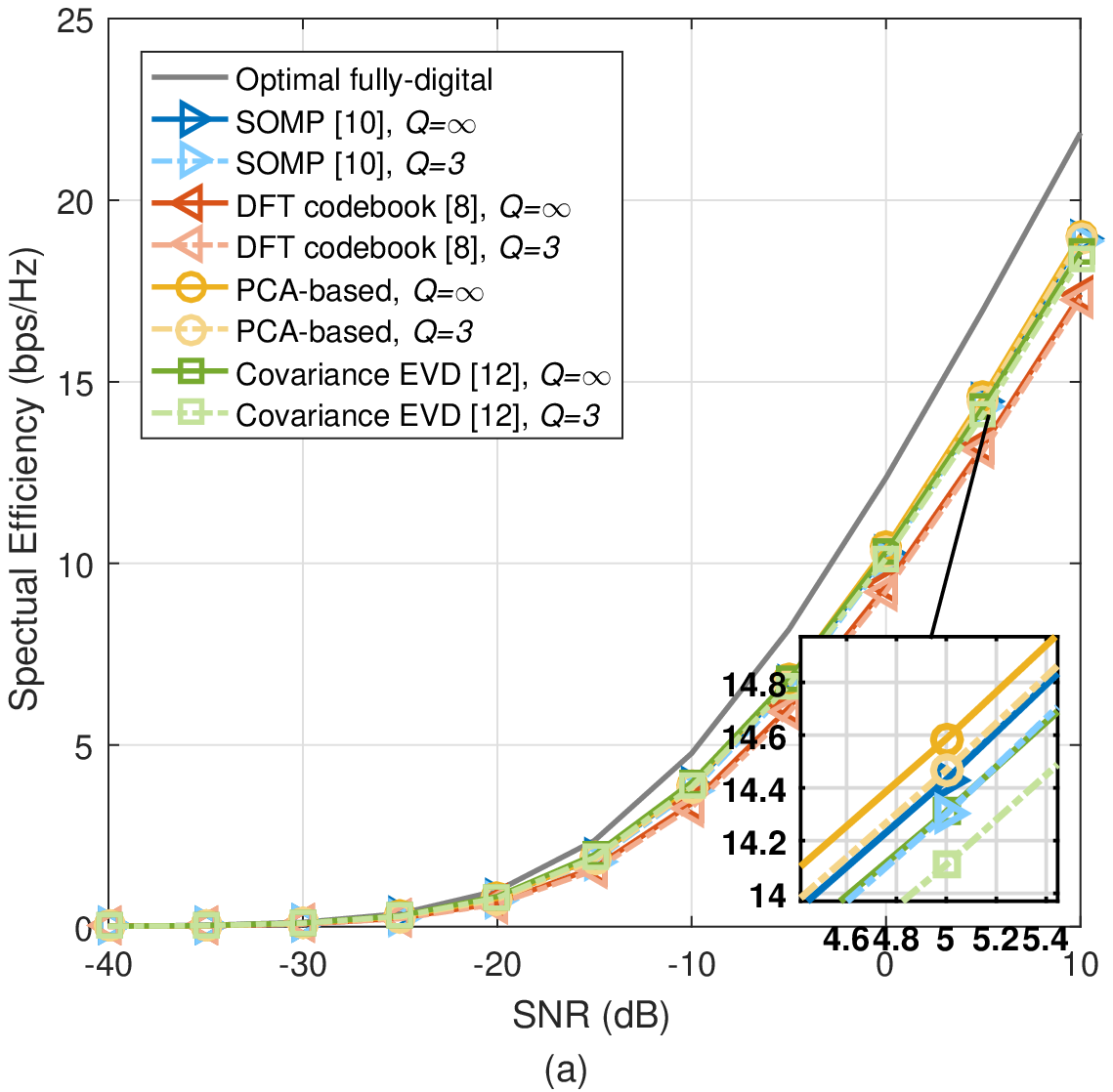}}
    \subfigure{\includegraphics[width=.49\columnwidth, keepaspectratio]{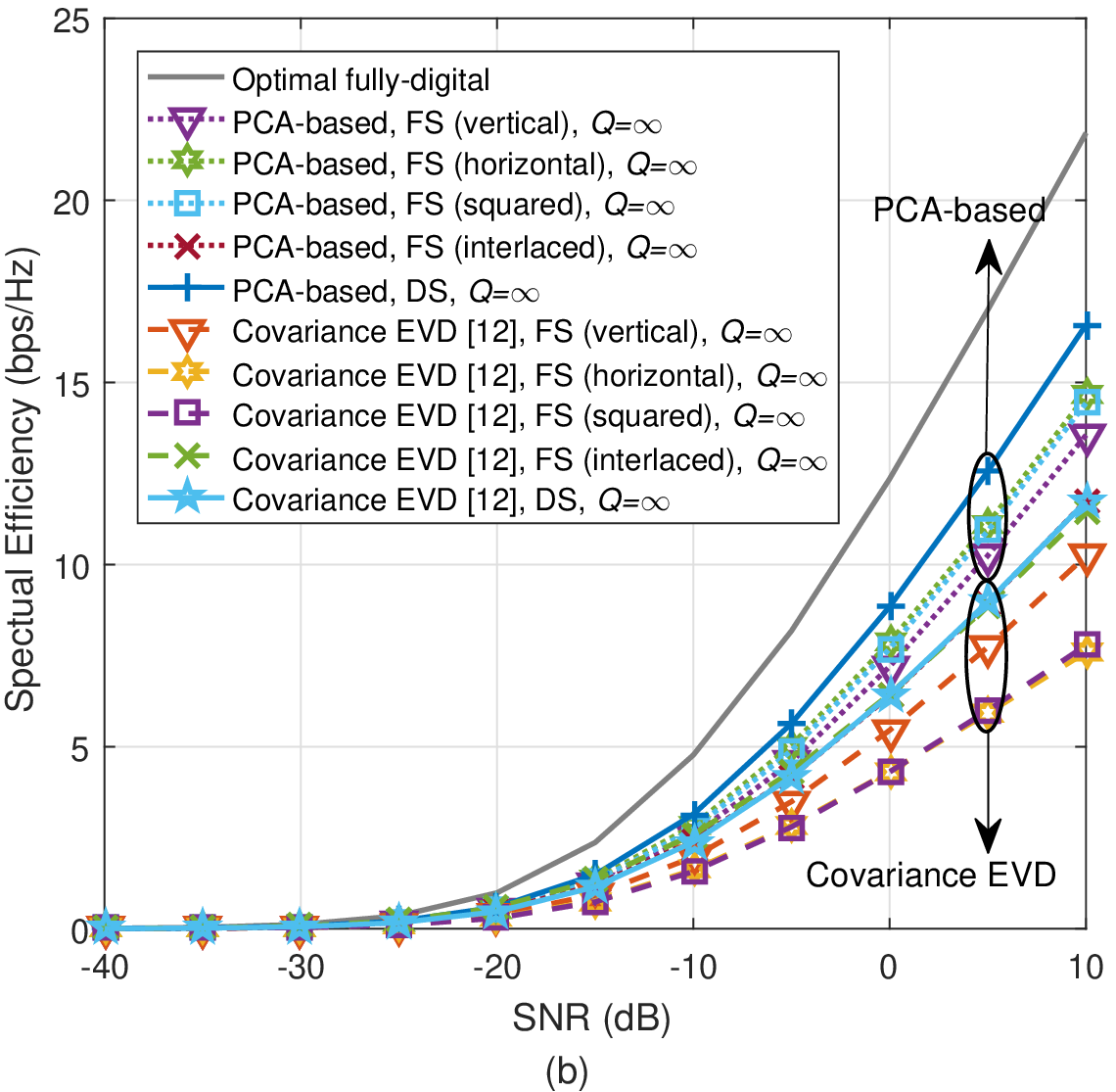}}\\
	\caption{SE performance comparison of different hybrid precoder schemes: (a) FCA with $Q=\infty$ and $Q=3$; (b) PCS with $Q=\infty$.}\label{fig_pre}
\end{figure}
In Fig. \ref{fig_pre}, we compare the SE performance of the proposed hybrid precoding scheme with the baselines, where both FCA and PCS are investigated. In Fig. \ref{fig_pre} (a), for FCA, our proposed PCA-based hybrid precoding scheme outperforms conventional DFT codebook-based hybrid precoding scheme and SOMP-based hybrid precoding scheme. Both the proposed PCA-based hybrid precoding scheme and covariance EVD-based hybrid precoding scheme have the very similar performance, and they suffer from negligible performance loss when compared to the optimal fully-digital scheme. This is because mmWave MIMO channels associated with different subcarriers share the same row space due to the common scatterers. Meanwhile, our proposed algorithm can exploit the principal components of the common row space to establish the hybrid precoder. The SOMP-based and DFT codebook-based hybrid precoding schemes work poorly, since their analog codebooks are limited to the steering vector forms. Finally, it can also be observed that the influence of quantization in phase shifters is negligible for our scheme. As for the PCS, Fig. \ref{fig_pre} (b) shows that our scheme outperforms conventional covariance EVD-based hybrid precoding scheme with different FS patterns and DS. The antenna grouping scheme in \cite{170825} considers a greedy approach, which may lead to the imbalance antenna grouping by acquiring local optimal solution. By contrast, the shared-AHC algorithm for DS grouping introduces the mutually correlation metric (\ref{mul_cor}), which can efficiently avoid this issue. Therefore, the proposed shared-AHC algorithm for DS grouping outperforms it counterpart in \cite{170825}.

\begin{figure}[tb]
    \centering \subfigure{\includegraphics[width=.49\columnwidth, keepaspectratio]{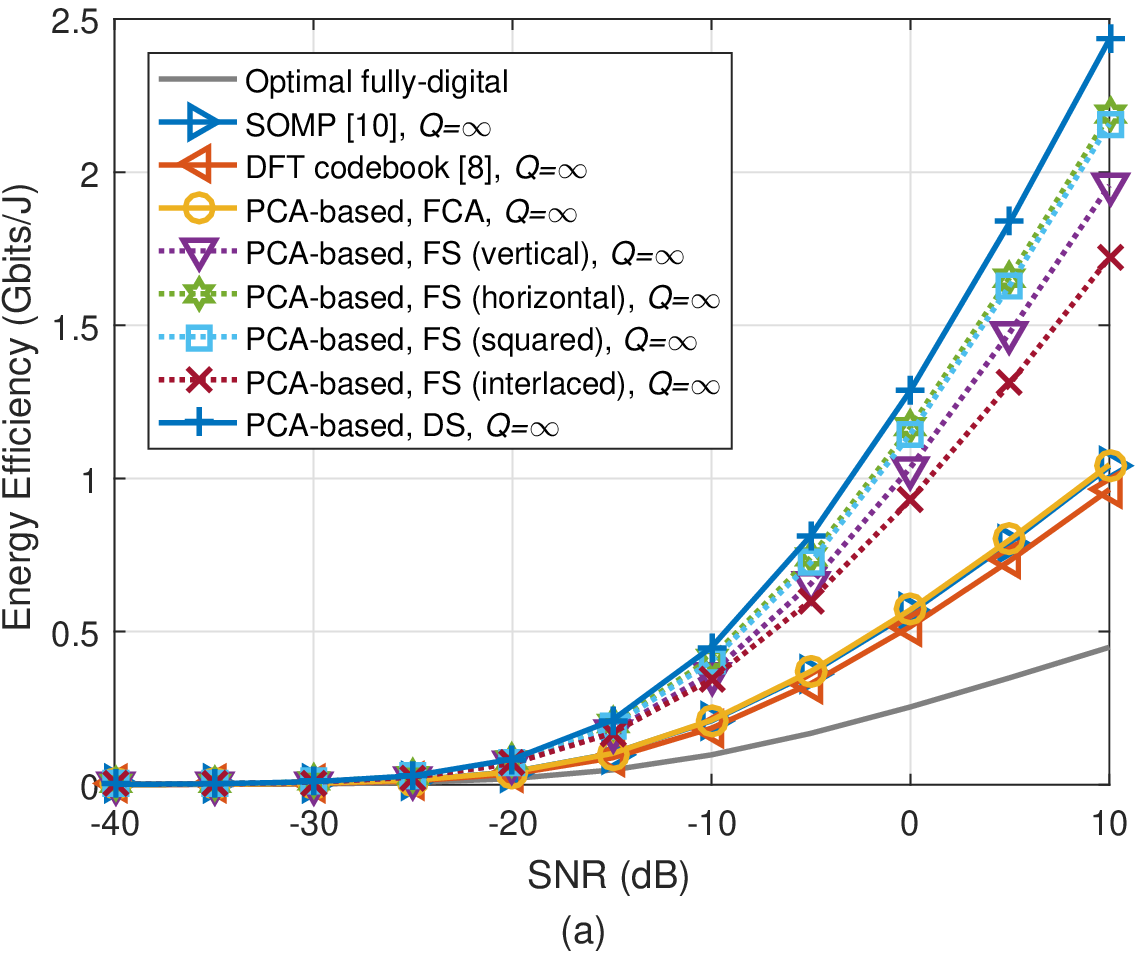}}
    \subfigure{\includegraphics[width=.49\columnwidth, keepaspectratio]{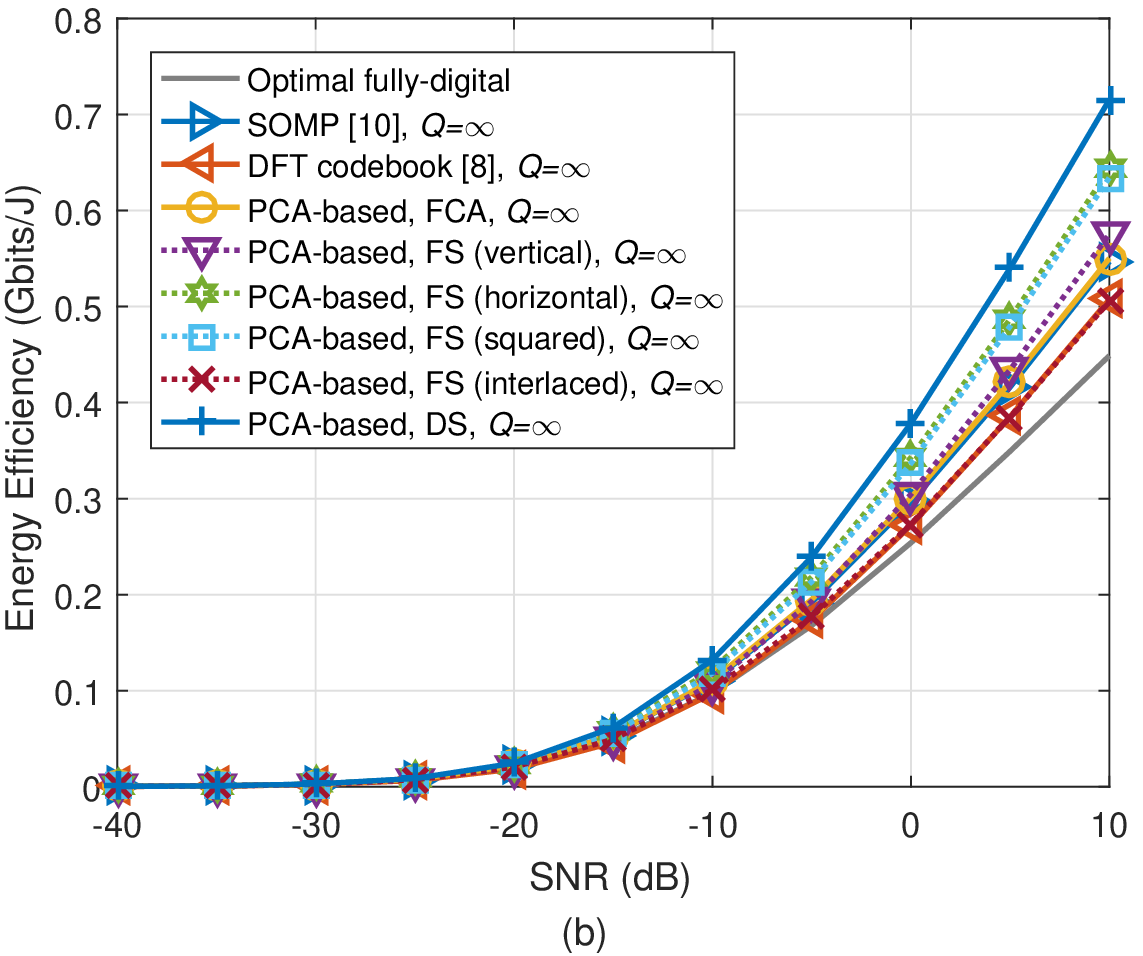}}\\
	\caption{EE performance comparison of different hybrid precoding schemes on different antenna architectures: (a) Passive antenna; (b) Active antenna with.}\label{fig_EE}
\end{figure}
In Fig. \ref{fig_EE}, we compare the EE performance of the proposed PCA-based hybrid precoding scheme and the baselines with FCA and PCS, where both passive and active antenna architectures are investigated. Note that the power values of key electronic components can refer to Section V.
In Fig. \ref{fig_EE} (a), for passive antenna architecture, the EE performance of PCS by using the proposed PCA-based hybrid precoding scheme outperforms that of FCA by using the SOMP-based and DFT codebook-based hybrid precoding schemes. The reason is that PCS adopts a much smaller number of phase shifters than FCA. Moreover, DS outperforms the other FS patterns in SE, and it consumes the same power with the other FS patterns. Therefore, DS outperforms other four types of FS patterns. It is worth mentioning that the optimal fully-digital scheme has the worst EE performance, since the numbers of power-consuming PAs, DACs, and mixers are proportional to that of antennas.
In Fig. \ref{fig_EE} (b), for active antenna architecture, the advantage of EE performance for different FS patterns by using the proposed hybrid precoding scheme over the FCA with several typical hybrid precoding schemes and optimal fully-digital precoding scheme is not considerable. This is because active antenna architecture requires the power-hungry PAs for each antenna. Meanwhile, the advantage of the reduced power consumption of FS structure is greatly weakened by its disadvantage in SE performance when compared to FCA. Finally, the EE performance of DS with the proposed hybrid precoding scheme still has the obvious advantage over the baselines and four typical types of FS with the proposed scheme. This reveals the appealing advantage of DS in practical situation when both the power consumption and SE should be well balanced.

\section{Conclusions}
This paper has proposed a hybrid precoding scheme based on machine learning for broadband mmWave MIMO
systems with DS. We first acquire the low-dimensional frequency flat precoder from the optimal frequency-selective precoders based on PCA for FS. Then, we extend the proposed PCA-based hybrid precoder design to the DS. We propose the shared-AHC algorithm inspired by cluster analysis in machine learning for antenna grouping to further improve the SE performance. Additionally, we analyze the EE performance for FCA, FS, and DS with passive and active antennas. Simulations further verify the proposed PCA-based hybrid precoding scheme has the better SE and EE performance than conventional schemes.

% conference papers do not normally have an appendix

% use section* for acknowledgment
\section*{Acknowledgment}
This work was supported by the National Natural Science Foundation of China (Grant Nos. 61471037, 61701027, and 61201181), the Beijing Natural Science Foundation (Grant No. 4182055), Huawei Innovation Research Program (HIRP), and Youth Project of China Academy of Information and Communications Technology.

%The authors would like to thank...

% trigger a \newpage just before the given reference
% number - used to balance the columns on the last page
% adjust value as needed - may need to be readjusted if
% the document is modified later
%\IEEEtriggeratref{8}
% The "triggered" command can be changed if desired:
%\IEEEtriggercmd{\enlargethispage{-5in}}

% references section

% can use a bibliography generated by BibTeX as a .bbl file
% BibTeX documentation can be easily obtained at:
% http://mirror.ctan.org/biblio/bibtex/contrib/doc/
% The IEEEtran BibTeX style support page is at:
% http://www.michaelshell.org/tex/ieeetran/bibtex/
%\bibliographystyle{IEEEtran}
% argument is your BibTeX string definitions and bibliography database(s)
%\bibliography{IEEEabrv,../bib/paper}
%
% <OR> manually copy in the resultant .bbl file
% set second argument of \begin to the number of references
% (used to reserve space for the reference number labels box)

% that's all folks
\end{document}